\documentclass[preprint,12pt]{elsarticle}
\usepackage{ifthen}
\newcommand{\techRep}{true} 
\newcommand{\iftechrep}{\ifthenelse{\equal{\techRep}{true}}}



\usepackage{amsmath}
\usepackage{amssymb}
\usepackage{btran}
\usepackage{tikz}

\newtheorem{theorem}{Theorem}[section]
\newtheorem{lemma}{Lemma}[section]
\newtheorem{proposition}{Proposition}[section]

\newdefinition{definition}{Definition}[section]
\newproof{proof}{Proof}

\sloppy

\journal{Information Processing Letters}

\begin{document}

\renewcommand{\P}[1]{{\cal P}\left(#1\right)}
\newcommand{\R}{\mathbb{R}}
\newcommand{\N}{\mathbb{N}}
\newcommand{\norm}[1]{\lVert#1\rVert}
\newcommand{\ind}{\mbox{}\hspace{7mm}}
\newcommand{\indd}{\ind\ind}
\newcommand{\inddd}{\indd\ind}
\newcommand{\indddd}{\inddd\ind}
\newcommand{\inddddd}{\indddd\ind}
\newcommand{\tran}[1]{\xrightarrow{#1}}
\newcommand{\qacc}{q_\mathit{acc}}
\newcommand{\qrej}{q_\mathit{rej}}
\newcommand{\init}{\vdash}
\newcommand{\fin}{\dashv}
\newcommand{\enc}[1]{\langle #1 \rangle}
\newcommand{\bdec}[1]{\mathit{num}(#1)}
\newcommand{\benc}[1]{[ #1 ]}
\newcommand{\Bin}{\mathit{Bin}}

\begin{frontmatter}



\title{BPA Bisimilarity is EXPTIME-hard}


\author{Stefan Kiefer\fnref{mylabel}}
\fntext[mylabel]{Stefan Kiefer is supported by the EPSRC.}

\address{University of Oxford, UK}

\begin{abstract}
 Given a basic process algebra (BPA) and two stack symbols, the BPA bisimilarity problem asks whether the two stack symbols are bisimilar.
 We show that this problem is EXPTIME-hard.
\end{abstract}

\begin{keyword}
basic process algebra \sep bisimilarity \sep computational complexity


\end{keyword}

\end{frontmatter}

\section{Introduction} \label{sec-intro}

Equivalence checking is the problem of determining whether two systems are semantically identical.
This is an important question in automated verification and, more generally,
 represents a line of research that can be traced back to the inception of theoretical computer science.
In particular, \emph{bisimilarity} is a fundamental notion for process algebraic formalisms~\cite{Milner}
 and enjoys pleasant mathematical properties.
As a result, a great deal of research in the analysis of infinite-state processes
 (such as pushdown automata or Petri nets)
 has been devoted to deciding bisimilarity of two given processes,
  see e.g.~\cite{KuceraJ06} for a comprehensive overview.

In this note we study bisimilarity for \emph{basic process algebras (BPAs)}.
A BPA consists of rules of the form
$
 X \btran{a} Y_1 Y_2 \cdots Y_k
$,
where $X, Y_1, \ldots, Y_k \in \Gamma$ are \emph{stack symbols} and $a$ is an \emph{action}, see Section~\ref{sec-prel} for formal details.
A BPA induces a (generally infinite) labelled transition system on~$\Gamma^*$ with $X \beta \tran{a} \alpha \beta$ whenever $X \btran{a} \alpha$
 (where $\alpha,\beta \in \Gamma^*$).
BPAs are also called \emph{context-free processes} and are closely related to context-free grammars in Greibach normal form.
They can be viewed as \emph{pushdown automata} with a single control state.

There is a large body of literature on decidability and complexity of determining bisimilarity for
 two processes in the \emph{process rewrite systems} hierarchy and beyond,
 see~\cite{KuceraJ06,SrbaRoadmap} for comprehensive and up-to-date surveys.
For many process classes the precise complexity of bisimilarity is unknown.
Let us focus here on the results that pertain most to the present note.
Although \emph{language} equivalence for context-free grammars and hence for BPAs is undecidable,
 bisimilarity of BPAs was shown to be decidable~\cite{Christensen92} in doubly exponential time~\cite{Burkart95,Burkart01,JancarBPA13}.
S{\'e}nizergues proved decidability of bisimilarity for general pushdown automata,
 even for the slightly larger class of \emph{equational graphs of finite out-degree}~\cite{Senizergues05}.
Lower complexity bounds have also been obtained:
St\v{r}\'{\i}brn\'{a}~\cite{Stribrna98} showed that \emph{weak} bisimilarity of BPAs is PSPACE-hard.%
\footnote{
``Weak'' in the context of bisimilarity means that transitions may be labelled by a ``non-visible'' action~$\tau$ that can be combined with any other action.
We mean strong bisimilarity in this note unless explicitly stated otherwise.
}
Mayr~\cite{Mayr00} subsequently proved PSPACE-hardness for bisimilarity of pushdown automata.
Srba~\cite{SrbaPSPACEhardBPA} improved both results by showing PSPACE-hardness for bisimilarity of BPAs.
Ku\v{c}era and Mayr~\cite{KuceraMayr02} proved that bisimilarity of pushdown automata is EXPTIME-hard.
Then Mayr~\cite{Mayr2005} showed EXPTIME-hardness also for weak bisimilarity of BPAs.
In this note we prove that bisimilarity of BPAs is EXPTIME-hard,
 thus improving or subsuming all hardness results mentioned above.

Our result establishes a contrast between BPAs and two related models,
 called \emph{one-counter processes} and \emph{basic parallel processes (BPPs)}, respectively.
One-counter processes are another subclass of pushdown automata with only two stack symbols, one of which is a bottom-of-stack marker.
Bisimilarity for one-counter processes is PSPACE-complete~\cite{BohmGJ10}.
BPPs can be viewed as a parallel (or commutative) variant of BPAs in which the rules may not only rewrite the leftmost stack symbol but an arbitrary one.
BPP bisimilarity was shown PSPACE-complete as well~\cite{SrbaPSPACEhardBPA,Jancar03}.
Our EXPTIME lower bound shows that the complexity for BPAs is different (unless EXPTIME = PSPACE).

The following known results rule out certain possibilities to extend our new lower bound:
EXPTIME-hardness also holds for weak bisimilarity of \emph{normed} BPAs~\cite{Mayr2005},
 where ``normed'' means that the labelled transition system has from each $X \in \Gamma$ a path to the empty word.
However, strong bisimilarity of normed BPAs is decidable in polynomial time~\cite{Hirshfeld1996},
 see also \cite{CzerwinskiL10} for a recent development.
Similarly, bisimilarity of \emph{visibly} pushdown automata is EXPTIME-complete~\cite{Srba09},
 whereas bisimilarity of visibly BPAs is in~P~\cite{Srba09}.

From a technical point of view, the proof in this note improves Srba's PSPACE-hardness proof~\cite{SrbaPSPACEhardBPA}.
A careful inspection shows that his reduction (from the QBF satisfiability problem to bisimilarity of BPAs) can be decomposed into two parts:
\begin{itemize}
\item The first part is a reduction to a particular reachability game on a \emph{counter},
 where one of the players attempts to reach a configuration from a particular set, and the other player wants to avoid that.
 The counter is \emph{succinct}, i.e., the numbers are represented in binary.
\item
  The second part is a reduction from this game to BPA bisimilarity.
\end{itemize}
Thus, the counter game was implicitly proved PSPACE-hard in~\cite{SrbaPSPACEhardBPA}.
In this note we consider the counter game separately and show that it is in fact EXPTIME-complete.
This is done by adapting (or using) an EXPTIME-completeness result on succinct counters from~\cite{JurdzinskiSL08}.
The second part of Srba's reduction (from the counter game to BPA bisimilarity) is then easily adapted.

\section{Preliminaries} \label{sec-prel}

Let $\N$ denote the set of nonnegative integers.

A \emph{labelled transition system (LTS)} is a triple $(S,\Sigma,\mathord{\tran{}})$,
 where $S$ is a countable set of \emph{states},
 $\Sigma$ is a finite set of \emph{actions}, and
 $\mathord{\tran{}} \subseteq S \times \Sigma \times S$ is a \emph{transition relation}.
We write $s \tran{a} t$ to mean $(s,a,t) \in \mathord{\tran{}}$.
The LTS is \emph{finitely branching} if each $s \in S$ has only finitely many outgoing transitions $s \tran{a} t$.

Given an LTS, a \emph{(strong) bisimulation} is a relation $R \subseteq S \times S$ such that for all $(s,s') \in R$ and $a \in \Sigma$ we have:
 (1) for all $t$ with $s \tran{a} t$ there is $t'$ with $s' \tran{a} t'$ and $(t,t') \in R$; and
 (2) for all $t'$ with $s' \tran{a} t'$ there is $t$ with $s \tran{a} t$ and $(t,t') \in R$.
We write $s \sim s'$ and say that $s$ and~$s'$ are \emph{(strongly) bisimilar} if there is a bisimulation~$R$ with $(s,s') \in R$.
We remark that $\mathord{\sim}$ is an equivalence relation and the the union of all bisimulations.
Due to K\H{o}nig's lemma we also have the following inductive characterisation of bisimilarity of finitely branching LTSs:
Consider the decreasing sequence of equivalence relations
 $\mathord{\sim_0} \supseteq \mathord{\sim_1} \supseteq \mathord{\sim_2} \supseteq \cdots$
 defined as the equivalence relations with $\mathord{\sim_0} := S \times S$,
 and $s \sim_{\ell+1} s'$ if for all $s \tran{a} t$ there is $s' \tran{a} t'$ with $t \sim_\ell t'$.
Then we have $\mathord{\sim} = \bigcap_{\ell \in \N} \mathord{\sim_\ell}$;
 so $s \not\sim s'$ implies $s \not\sim_{\ell} s'$ for some $\ell \in \N$.

A \emph{basic process algebra (BPA)} is a triple $(\Gamma,\Sigma,\mathord{\btran{}})$,
 where $\Gamma$ is a finite set of \emph{stack symbols},
 $\Sigma$ is a finite set of \emph{actions},
 and $\mathord{\btran{}} \subseteq \Gamma \times \Sigma \times \Gamma^*$ is a finite set of \emph{transition rules}.
We write $X \btran{a} \alpha$ to mean $(X,a,\alpha) \in \mathord{\btran{}}$.
A stack symbol~$X \in \Gamma$ is \emph{dead} if there is no outgoing rule $X \btran{a} \alpha$.
A BPA generates a finitely branching LTS $(\Gamma^*,\Sigma,\mathord{\tran{}})$ with $X \beta \tran{a} \alpha \beta$ if $X \btran{a} \alpha$
 (where $X \in \Gamma$ and $\alpha,\beta \in \Gamma^*$).

\section{Main Result} \label{sec-main}

The \emph{BPA bisimilarity problem} asks, given a BPA $(\Gamma,\Sigma,\mathord{\btran{}})$ and two stack symbols $X,X' \in \Gamma$ whether $X \sim X'$
 holds in the generated LTS.
We prove the following theorem:
\begin{theorem} \label{thm-main}
 The BPA bisimilarity problem is EXPTIME-hard, even if the BPA has no dead stack symbols and only two actions.
 It is also EXPTIME-hard if the BPA has only one dead symbol and only one action.
\end{theorem}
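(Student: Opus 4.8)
The plan is to follow the two-part strategy outlined in the introduction. First I would fix an EXPTIME-complete \emph{succinct counter game}: a two-player reachability game played on a game graph of size polynomial in~$n$, equipped with a single counter holding a value in $\{0,\dots,2^n-1\}$, where the moves modify the counter (say by adding or subtracting powers of two, with zero/threshold tests) and one player tries to steer the play into a designated target set while the other tries to avoid it. Membership in EXPTIME is routine, since the induced arena has only exponentially many configurations and its reachability winner can be computed by the standard attractor fixpoint. For the hardness I would \emph{adapt the succinct-counter result of}~\cite{JurdzinskiSL08}, massaging it into exactly the move set and winning condition that the subsequent BPA encoding can simulate most cleanly.

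Next I would reduce this game to BPA bisimilarity. The key is to read $X \sim X'$ through the \emph{bisimulation game}: an Attacker chooses a transition $s \tran{a} t$ on one side and a Defender must answer $s' \tran{a} t'$ on the other with the same label, the Defender winning exactly when $X \sim X'$. I would let the Defender impersonate the existential (reachability) player of the counter game and the Attacker impersonate the universal (safety) player, and encode a counter configuration as a stack word that spells out the current control location together with the counter value in binary. One counter-game move would be simulated by a bounded block of bisimulation rounds that rewrite the \emph{top} of the stack to reflect the new configuration. The indispensable tool here is the \emph{Defender's forcing} technique: wherever the existential player should get to choose among several moves, I would install paired gadgets in which the Attacker, on pain of landing in an immediately non-bisimilar pair, is compelled to let the Defender select the successor; the universal player's choices, dually, are realised directly by the Attacker. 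Setting things up so that the Defender can maintain $X \sim X'$ precisely when following a winning counter-game strategy yields the reduction.

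The hard part will be making the binary counter operations \emph{faithful} under bisimulation. Because a BPA only ever exposes the top of its stack and rewrites it in LIFO fashion, I must lay out the bits in an order (least significant near the top) that lets increment, decrement, and the zero/threshold tests be carried out by purely local top-of-stack rewrites that propagate carries step by step — and crucially these rewrites must be arranged so that a cheating player can never manufacture a move the opponent cannot match unless this genuinely corresponds to an illegal counter operation. I would expect most of the technical effort to go into verifying that every deviation by either player is punished in the bisimulation game exactly as intended.

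Finally I would discharge the two refinements in the statement. For the version with \emph{no dead symbols and only two actions}, I would equip every stack symbol with outgoing rules and use the second action purely as a ``challenge'' label that locks down the Defender's obligations, so that no configuration is ever artificially stuck. For the version with \emph{one dead symbol and a single action}, that challenge label is no longer available, so the only observable distinguishing two plays is whether they terminate in lock-step; I would therefore re-engineer the gadgets so that any Defender deviation forces a termination mismatch — one computation reaching the lone dead symbol while the other can still move — which is detectable even with a single action. Checking that these two engineerings preserve the correctness of the Defender's-forcing argument is the remaining obstacle.
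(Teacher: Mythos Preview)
Your two-step outline matches the paper's, but the paper's execution is considerably simpler and sidesteps the obstacle you flag as ``the hard part.'' The paper's counter game (a \emph{hit-or-run game}) is designed so that the counter only ever \emph{increases} and there are no zero or threshold tests during play; Player~1 wins by reaching the final state~$s_\fin$ with any value $\ne k_\fin$, Player~0 by hitting $(s_\fin,k_\fin)$ exactly or never reaching~$s_\fin$. This lets the BPA side encode a number not as a bit string but as a word over symbols $\#_0,\dots,\#_b$ (with $\#_i$ worth~$2^i$) under the single family of rules $\#_i \btran{} \#_0\cdots\#_{i-1}$, which decrements the encoded total by~1 with \emph{no carry propagation at all}: two stacks $\alpha\bot$ and $\alpha'\bot$ are bisimilar iff they encode the same number, and the final-state check is just $s_\fin\btran{}\varepsilon$ versus $s_\fin'\btran{}\benc{k_\fin}\bot$. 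Your bit-string layout with step-by-step carries could be made to work but manufactures exactly the faithfulness headaches you anticipate; the paper's encoding eliminates them. For the choice structure the paper uses concrete Or/And gadgets (Lemma~\ref{lem-gadgets}, the Or-gadget being a variant of Defender's forcing), and both refinements in the statement are discharged by a single switch on the symbol~$\bot$: either $\bot$ is dead (one action, one dead symbol) or it carries a self-loop $\bot\btran{\bar a}\bot$ on a second action (two actions, no dead symbols). No separate re-engineering of the gadgets for the one-action case is needed.
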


We prove Theorem~\ref{thm-main} in two steps.
In Section~\ref{sub-counter} we show that determining the winner in a particular reachability game on a counter is EXPTIME-hard,
 if the counter is succinct, i.e., the numbers are written in binary.
The proof follows closely a proof from~\cite{JurdzinskiSL08}.
In Section~\ref{sub-bisim} we reduce the succinct counter problem to the BPA bisimilarity problem.

\subsection{A Reachability Game On a Succinct Counter} \label{sub-counter}

A \emph{hit-or-run game} is a tuple $(S_0, S_1, \mathord{\tran{}}, s_\init, s_\fin, k_\fin)$ where
 $S = S_0 \cup S_1$ for disjoint $S_0, S_1$ is a finite set of states,
  and $\mathord{\tran{}} \subseteq S \times \N \times (S \cup \{s_\fin\})$ is a transition relation,
   $s_\init \in S$ is the initial state,
   $s_\fin \not\in S$ is the final state,
    and $k_\fin \in \N$ is the final value.
We write $s \tran{\ell} t$ if $(s, \ell, t) \in \mathord{\tran{}}$.
We require that for each $s \in S$ there is at least one outgoing transition $s \tran{\ell} t$.
A \emph{configuration} of the game is a pair $(s,k) \in (S \cup \{s_\fin\}) \times \N$.
The game is played by two players, Player~$0$ (``she'') and Player~$1$ (``he'').
The game starts in configuration $(s_\init,0)$ and proceeds in \emph{moves}:
 if the current configuration is $(s,k) \in S_i \times \N$ for $i \in \{ 0,1 \}$, then Player~$i$ chooses a transition $s \tran{\ell} t$.
The resulting new configuration is $(t, k + \ell)$.
Player~$1$'s goal is to reach a configuration $(s_\fin,k)$ with $k \ne k_\fin$.
Consequently, Player~$0$'s goal is to keep the game within $\{(s_\fin,k_\fin)\}\ \cup \ S \times \N$.
The name ``hit-or-run'' refers to Player~$0$'s options to win the game: ``hit'' $(s_\fin,k_\fin)$ or ``run'' from~$s_\fin$.

\begin{proposition} \label{prop-counter-game}
 Given a hit-or-run game with numbers written in binary, the problem of determining the winner is EXPTIME-complete.
\end{proposition}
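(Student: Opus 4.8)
The plan is to establish the two directions separately, with EXPTIME membership being routine and EXPTIME-hardness carrying the content, the latter following \cite{JurdzinskiSL08}.

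For \emph{membership}, the key observation is that the counter is monotone: since every transition $s \tran{\ell} t$ has $\ell \in \N$, the value~$k$ never decreases along a play. Hence once $k$ exceeds $k_\fin$, Player~$0$ can no longer hit $(s_\fin, k_\fin)$, so her only remaining hope is to ``run'', i.e.\ to keep the play inside $S \times \N$ forever. This lets me collapse all configurations with $k > k_\fin$ into a single overflow copy per state, where the sole relevant question is whether $s_\fin$ can be avoided. The result is a finite game graph whose vertices are the configurations $(s,k)$ with $s \in S$ and $0 \le k \le k_\fin$, together with $|S|$ overflow vertices; its size is $O(|S| \cdot k_\fin)$, exponential in the binary input but finite. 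I then route $(s_\fin,k_\fin)$ to a sink won by Player~$0$ and each $(s_\fin,k)$ with $k \ne k_\fin$ to a sink won by Player~$1$, turning the problem into a plain reachability/safety game. Since finite reachability games are determined and solvable in time polynomial in the arena via attractor computation, the winner can be decided in time polynomial in the graph, hence exponential in the input.

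For \emph{hardness}, I reduce from the succinct counter game of \cite{JurdzinskiSL08}, which is EXPTIME-complete with numbers in binary. The reduction is essentially a syntactic translation, but two mismatches must be reconciled. First, that game decrements a counter toward a target whereas the hit-or-run game only increments; I track the complement, so that reaching the target in the source game corresponds exactly to hitting $k_\fin$ in the hit-or-run game. Second, in the source game the choice of the counter change and the choice of the successor location may belong to different players, while a single hit-or-run transition bundles the increment~$\ell$ with the successor~$t$ and is owned by one player; I split each composite move into two consecutive states, one in~$S_0$ and one in~$S_1$, so that the alternation of the source game is faithfully simulated. Under this encoding a winning strategy for the protagonist maps to a Player~$0$ strategy that hits $(s_\fin,k_\fin)$ or survives forever, and a winning strategy for the antagonist maps to a Player~$1$ strategy forcing $(s_\fin,k)$ with $k \ne k_\fin$.

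I expect the main obstacle to be the exact alignment of winning conditions, namely ensuring that Player~$0$'s two disjoint ways of winning --- ``hit'' $k_\fin$ and ``run'' from~$s_\fin$ --- correspond precisely to the protagonist's winning plays in the source game, and that overshooting the target (the counter passing $k_\fin$) can always be punished by forcing the play into $s_\fin$ with a wrong value, without opening up any spurious winning strategy for the wrong player. Once the gadgets implementing this punishment are in place, the remaining verification is bookkeeping on the binary encodings.
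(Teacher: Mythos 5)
Your proposal is correct and takes essentially the same route as the paper: the hardness part coincides with the paper's alternative (shorter) proof of Proposition~\ref{prop-counter-game}, which likewise reduces from the countdown games of \cite{JurdzinskiSL08} by splitting each composite move into an $S_0$-state followed by an $S_1$-state and adding an overshoot-punishment gadget (a chain of states in which Player~$0$ may top the counter up, power of two by power of two, to exactly $k_\fin$). Your membership argument --- explicitly unfolding the arena into the exponentially many configurations with $k \le k_\fin$ plus overflow copies and running an attractor computation --- is an equivalent reformulation of the paper's alternating-PSPACE argument.
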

\begin{proof}
For the upper bound recall that Player~$1$'s objective is to reach any configuration in a given set.
So if he can win, he can win using a positional strategy.
Moreover, the counter value never decreases and Player~$1$ cannot benefit from repeating configurations.
So if he can win, he can win within exponentially many steps.
It follows that one can construct a straightforward alternating PSPACE Turing machine
 that accepts if and only if Player~$1$ can win:
 the existential steps correspond to Player-$1$ moves, the universal steps correspond to Player-$0$ moves.
The machine uses an extra counter (in polynomial space) to reject if it does not accept within an exponential time bound.
Therefore the problem is in APSPACE, which equals EXPTIME.

For the lower bound we adapt a proof given in~\cite{JurdzinskiSL08} for so-called \emph{countdown games}.
We give a polynomial-time reduction from the problem of acceptance of a word by a PSPACE-bounded alternating Turing machine.
Let $M = (\Sigma, Q_\exists, Q_\forall, \delta, q_\init, \qacc, \qrej)$ be a PSPACE-bounded alternating Turing machine,
 where $\Sigma$ is a finite alphabet, and $Q = Q_\exists \cup Q_\forall$ is a finite set of (control) states partitioned into existential states~$Q_\exists$
  and universal states~$Q_\forall$,
 and $\delta \subseteq Q \times \Sigma \times Q \times \Sigma \times \{L, R\}$ is a transition relation.
A transition $(q, a, q', a', D) \in \delta$ means that if $M$ is in state $q$ and its head reads letter~$a$,
 then it rewrites the contents of the current cell with the letter~$a'$, it moves the head in direction~$D$ (either left if $D=L$, or right if $D=R$),
  and it changes its state to~$q'$.
We assume that for all $q \in Q$ and $a \in \Sigma$ there is at least one outgoing transition,
 and that $M$ does self-loops in~$\qacc$ and in~$\qrej$.
W.l.o.g.\ we can also assume that all computations reach either $\qacc$ or~$\qrej$,
 and no configuration is repeated before that (this is achieved, e.g., using a counter on the tape).

Let $w \in \Sigma^n$ be the input word.
We can assume that during its computation $M$ uses exactly $N$ tape cells (with $N$ polynomial in~$n$),
 so we can encode a tape content as word $u \in \Sigma^N$.
Let $G := |\Sigma|$.
Let $\enc{\cdot} : \Sigma \to \{0, 1, \ldots, G-1\}$ be a bijection.
For every~$a \in \Sigma$, it is convenient to think of~$\enc{a}$ as a $G$-ary digit,
 and we extend~$\enc{\cdot}$ to $\Sigma^N$ by $\enc{a_{0} a_{1} \cdots a_{N-1}} := \sum_{i=0}^{N-1} \enc{a_i} \cdot G^i < G^N$.
In this way every tape content in $\Sigma^N$ can be seen as a residue class modulo~$G^N$,
 and a rewrite of the tape can be simulated by adding a number.


We define a hit-or-run game so that Player~$0$ can win it if and only if $M$ accepts~$w$.
The main part of the game is constructed so that a play simulates a computation of~$M$ on~$w$.
The (control) states of the game encode the (control) states of~$M$ and the position of the read head.
In each step, Player~$0$ will make a claim about what the head currently reads.
Then Player~$1$ is given the option to doubt this claim.
If he doubts, the claim is checked: Player~$0$ will win the game if and only if her claim was true.
If Player~$1$ does not doubt Player~$0$'s claim, Player~$0$ or Player~$1$ will pick a transition
 if the simulation is currently in an existential or universal state, respectively.
The tape rewrite can be simulated by adding a suitable number,
 which only depends on the position~$i$ of the head and the old and the new content of the $i$th tape cell.
If the simulation reaches $\qacc$ or~$\qrej$, then Player~$0$ or Player~$1$, respectively, will win the game.
Therefore, if $M$ rejects~$w$, Player~$0$ will be forced to ``lie'' eventually, which enables Player~$1$ to win.
On the other hand, if $M$ accepts~$w$, Player~$0$ can simulate correctly until the computation has reached~$\qacc$,
 so in order to win, Player~$1$ needs to wrongfully doubt a correct claim made by Player~$0$,
  which Player~$0$ can punish by winning the game.

We now give the details.
For every $q \in Q$ and $i \in \{0, \ldots, N-1\}$, the game includes a state~$(q,i) \in S_0$ (exception: $(\qrej,i) \in S_1$).
A game configuration $((q,i),k)$ corresponds to the configuration of~$M$ with state~$q$,
 the head at position~$i$, and tape content $k \bmod G^N$.
For each $(q,i,a) \in Q \times \{0, \ldots, N-1\} \times \Sigma$ there is a state~$(q,i,a) \in S_1$
 and a transition $(q,i) \tran{0} (q,i,a)$.
By choosing the transition to~$(q,i,a)$, Player~$0$ claims that the tape cell at position~$i$ currently contains~$a$.
If Player~$1$ accepts the claim, he takes a transition $(q,i,a) \tran{0} (q,i,a,*)$ where $(q,i,a,*) \in S_0$ if $q \in Q_\exists$
 and $(q,i,a,*) \in S_1$ if $q \in Q_\forall$.
If Player~$1$ doubts the claim, he takes a transition $(q,i,a) \tran{0} s^{i,a}_0$ where $s^{i,a}_0 \in S_0$ is a state from which the claim will be checked,
 as we describe below.
For each $(q,a,q',a',D) \in \delta$ there is a transition $(q,i,a,*) \tran{k(i,a,a')} (q',i')$ where $i' = i-1$ if $D=L$ and $i' = i+1$ if $D=R$,
 and $k(i,a,a')$ is chosen to reflect the tape update at position~$i$ from~$a$ to~$a'$.
This is achieved by taking $k(i,a,a') := G^i \cdot (\enc{a'} - \enc{a}) + G^N$.
In order to make $(\qacc,i) \in S_0$ winning and $(\qrej,i) \in S_1$ losing for Player~$0$,
 we add transitions $(\qacc,i) \tran{0} (\qacc,i)$ and $(\qrej,i) \tran{1} (\qrej,i)$ and $(\qrej,i) \tran{0} s_\fin$,
  where $s_\fin$ is the final state of the hit-or-run game, which will be specified later on.

It remains to define the game from the states $s^{i,a}_0$ on from which it will be checked whether the tape cell at position~$i$ contains the letter~$a$.
This will be done in two phases.
In the first phase we allow Player~$0$ to add to the counter so that the first $N$ ``$G$-ary digits'' are set to~$0$,
 except at position~$i$ where she can set it to~$0$ only if the $i$th tape cell contains~$a$.
Thus Player~$0$ can reach the end of the first phase with a multiple of~$G^N$ if and only if her claim was true.
During the simulation of the computation and also during the first phase that was just described,
 each step, apart from modifying the lower $N$ digits, increases the counter by~$G^N$.
As a consequence, the digits at positions $N, \ldots, N'$ may have a nonzero value, where $N' \in \N$ is polynomial in~$n$.
In the second phase we give Player~$0$ the possibility to set all those digits to $G-1$.
Therefore, Player~$0$ can reach counter value $k_\fin := G^{N'} - G^N$ in the final state if and only if she did not lie before entering~$s^{i,a}_0$.

For the first phase we add states $s^{i,a}_j \in S_0$ for all $j \in \{0, \ldots, N\}$,
 and transitions $s^{i,a}_j \tran{-G^j \cdot \ell + G^N} s^{i,a}_{j+1}$ for all $j \in \{0, \ldots, N-1\} \setminus \{i\}$ and $\ell \in \{0, \ldots, G-1\}$,
 and a transition $s^{i,a}_i \tran{-G^i \cdot \enc{a} + G^N} s^{i,a}_{i+1}$.
We identify $s^{i,a}_N$ with a single state~$s_N \in S_0$, which marks the end of the first phase and the start of the second phase.

For the second phase, recall that if $M$ rejects~$w$, then Player~$0$ is forced to lie after at most $m := |Q| \cdot N \cdot |\Sigma|^N$
 transitions (as there are no repeating configurations).
Each of those transitions, apart from modifying the lower $N$ digits, increases the counter by~$G^N$.
The same holds for the $N$ transitions of the first phase.
So we can take $N' := \min \{i \in \N \mid G^{i} - G^N \ge G^N \cdot (m + N)\}$.
We add states $s_{N+1}, \ldots, s_{N'-1} \in S_0$ and the final state~$s_{N'}$ and transitions $s_i \tran{G^i \cdot \ell} s_{i+1}$
 for all $i \in \{N, \ldots, N'-1\}$ and $\ell \in \{0, \ldots, G-1\}$.

The complete hit-or-run game $(S_0, S_1, \mathord{\tran{}}, s_\init, s_\fin, k_\fin)$ consists of the states and transitions described above
 and $s_\fin := s_{N'}$ and $k_\fin := G^{N'} - G^N$ and an initial state $s_\init \in S$ with a transition $s_\init \tran{\enc{u_\init}} (q_\init,0)$,
  where $u_\init$ denotes the initial tape content of~$M$ on~$w$, and we assume that the initial position of the read head is~$0$.
\qed
\end{proof}

\iftechrep{We}{In~\cite{KieferBPA-techrep} we} offer an alternative EXPTIME-hardness proof that is shorter but not completely self-contained,
 as it relies on an EXPTIME-hardness proof given in~\cite{JurdzinskiSL08} for countdown games.

\iftechrep{%
\begin{proof}[shorter proof of the lower bound in Proposition~\ref{prop-counter-game}]
We reduce from the problem of determining the winner in a countdown game~\cite{JurdzinskiSL08}.
A \emph{countdown game} is a tuple $(Q, \mathord{\ctran{}}, q_\init, k_\fin)$ where
 $Q$ is a finite set of states,
 $\mathord{\ctran{}} \subseteq Q \times \N\setminus\{0\} \times Q$ is a transition relation,
 $q_\init \in Q$ is the initial state,
 and $k_\fin$ is the final value.
We write $q \ctran{\ell} r$ if $(q, \ell, r) \in \mathord{\ctran{}}$.
A configuration of the game is an element $(q,k) \in Q \times \N$.
The game starts in configuration $(q_\init,0)$ and proceeds in moves:
 if the current configuration is $(q,k) \in Q \times \N$,
  first Player~$0$ chooses a number~$\ell$ with $0 < \ell \le k_\fin - k$ and $q \ctran{\ell} r$ for at least one $r \in Q$;
  then Player~$1$ chooses a state $r \in Q$ with $q \ctran{\ell} r$.
The resulting new configuration is $(r,k+\ell)$.
Player~$0$ wins if she hits a configuration from $Q \times \{k_\fin\}$, and she loses if she cannot move (and has not yet won).
(We have slightly paraphrased the game from~\cite{JurdzinskiSL08} for technical convenience,
 rendering the term \emph{countdown} game somewhat inept.)

The problem of determining the winner in a countdown game was shown EXPTIME-complete in~\cite{JurdzinskiSL08}.
Let $(Q, \mathord{\ctran{}}, q_\init, k_\fin)$ be a countdown game.
W.l.o.g.\ we can assume that each $q \in Q$ has an outgoing transition $q \ctran{\ell} r$
 (this can, e.g., be achieved by adding self-loops $q \ctran{k_\fin+1} q$).
We show how to compute in polynomial time a hit-or-run game $(S_0, S_1, \mathord{\tran{}}, q_\init, s_N, k_\fin)$
 so that Player~$0$ can win the countdown game if and only if she can win the hit-or-run game.
Let $N \in \N$ such that $2^{N-1} \ge k_\fin$.
We include states $s_0, \ldots, s_{N-1} \in S_0$ and transitions $s_i \tran{2^i} s_{i+1}$ and $s_i \tran{0} s_{i+1}$
 for all $i \in \{0, \ldots, N-1\}$.
Observe that in the hit-or-run game a configuration $(s_0,k)$ is winning for Player~$0$ if and only if $k \le k_\fin$.
We include $Q \subseteq S_0$, and for all $(q,\ell) \in Q \times \N$ with $q \ctran{\ell} r$ for some~$r$ we include a state $q^\ell \in S_1$.
For each transition $q \ctran{\ell} r$
 we include 
 transitions $q \tran{\ell} q^\ell$ and $q^\ell \tran{0} r$ and $q \tran{0} s_N$ and $q^\ell \tran{0} s_0$.

We show that this reduction works.
Assume that Player~$0$ can win the countdown game.
Then she can emulate her strategy in the hit-or-run game until a configuration $(q,k_\fin)$ is reached,
 which allows her to move to configuration $(s_N,k_\fin)$ and win the game.
If Player~$1$ interrupts this strategy by moving to a configuration $(s_0,k)$ for $k \le k_\fin$,
 Player~$0$ wins as well as described above.
Now assume that Player~$1$ can win the countdown game.
Then he can emulate his strategy in the hit-or-run game until
 Player~$0$ is forced to move to a configuration~$(s_N,k)$ with $k<k_\fin$ (winning for Player~$1$)
  or to a configuration $(q^\ell,k)$ with $k > k_\fin$ (winning for Player~$1$ as well, as he can then move to~$(s_0,k)$).
\qed
\end{proof}
}

\subsection{From the Counter Game to BPA Bisimilarity} \label{sub-bisim}

We now reduce the problem of determining the winner in a hit-or-run game to the BPA bisimilarity problem.
To this end we will use the ``gadgets'' of Figure~\ref{fig-gadgets}.
\begin{figure}
\begin{center}
\begin{tabular}{c@{\qquad\qquad}c}
\begin{tikzpicture}[xscale=1.6]
\tikzstyle{mystate} = [minimum height=6mm, minimum width=8mm,rounded corners,inner sep=0,draw];
\tikzstyle{every node} = [mystate];

\node (s) at  (1,3)    {$s$};
\node (s') at (2,3)    {$s'$};
\node (12) at (0,2)    {$u_{12}$};
\node (1'2') at (1,2)  {$u_{1'2'}$};
\node (12') at (2,2)   {$u_{12'}$};
\node (1'2) at (3,2)   {$u_{1'2}$};
\node (t1) at (0,0)    {$t_1$};
\node (t1') at (1,0)   {$t_1'$};
\node (t2) at (2,0)    {$t_2$};
\node (t2') at (3,0)   {$t_2'$};
\draw[->] (s)--(1'2');
\draw[->] (s)--(12);
\draw[->] (s')--(12');
\draw[->] (s')--(1'2);
\draw[->] (1'2')--(t1');
\draw[->] (1'2')--(t2');
\draw[->] (12)--(t1);
\draw[->] (12)--(t2);
\draw[->] (12')--(t1);
\draw[->] (12')--(t2');
\draw[->] (1'2)--(t1');
\draw[->] (1'2)--(t2);
\end{tikzpicture}
&
\begin{tikzpicture}[xscale=1.6]
\tikzstyle{mystate} = [minimum height=6mm, minimum width=8mm,rounded corners,inner sep=0,draw];
\tikzstyle{every node} = [mystate];

\node (s) at  (1,3)  {$s$};
\node (s') at (2,3)  {$s'$};
\node (u1) at (0,2)    {$u_1$};
\node (u1') at (1,2)   {$u_1'$};
\node (u2) at (2,2)    {$u_2$};
\node (u2') at (3,2)   {$u_2'$};
\node (t1) at (0,0)    {$t_1$};
\node (t1') at (1,0)   {$t_1'$};
\node (t2) at (2,0)    {$t_2$};
\node (t2') at (3,0)   {$t_2'$};
\node (bot) at (2.5,1) {$\bot$};
\draw[->] (s)--(u1);
\draw[->] (s)--(u2);
\draw[->] (s')--(u1');
\draw[->] (s')--(u2');
\draw[->] (u2)--(bot);
\draw[->] (u2')--(bot);
\draw[->] (u1)--(t1);
\draw[->] (u1')--(t1');
\draw[->] (u2)--(t2);
\draw[->] (u2')--(t2');
\end{tikzpicture}
\end{tabular}
\end{center}
\caption{(a) Or-gadget \hspace{40mm} (b) And-gadget}
\label{fig-gadgets}
\end{figure}
We have the following lemma, adapted from~\cite{ChenBW12}.
\begin{lemma}[see \cite{ChenBW12}] \label{lem-gadgets}
Consider the states and transitions in Figure~\ref{fig-gadgets} (a) or~(b) as part of an LTS.
The states $s, s'$ may have incoming transitions, the states $t_1,t_1',t_2,t_2',\bot$ may have outgoing transitions (not shown).
All transitions in the figure are labelled with the same action (not shown).
Assume that $\bot \not\sim_1 t$ for all $t \in \{t_1,t_1',t_2,t_2'\}$.
Then we have for the gadgets in Figure~\ref{fig-gadgets} and $\ell \ge 1$:
\begin{itemize}
 \item[(a)]
  for the Or-gadget:  $s \sim_{\ell+2} s'$ if and only $t_1 \sim_\ell t_1'$ or  $t_2 \sim_\ell t_2'$;
 \item[(b)]
  for the And-gadget: $s \sim_{\ell+2} s'$ if and only $t_1 \sim_\ell t_1'$ and $t_2 \sim_\ell t_2'$.
\end{itemize}
\end{lemma}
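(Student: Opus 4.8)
The plan is to use the inductive characterisation $\mathord{\sim} = \bigcap_\ell \mathord{\sim_\ell}$ from Section~\ref{sec-prel}. Since every transition in Figure~\ref{fig-gadgets} carries the same action, the defining clause of $\sim_{m+1}$ reduces to matching successor \emph{states}: $p \sim_{m+1} q$ iff every successor of $p$ is $\sim_m$ to some successor of $q$ and vice versa. I shall repeatedly use that each $\sim_m$ is an equivalence relation, and that $\sim_\ell \subseteq \sim_1$ for $\ell \ge 1$, so the hypothesis $\bot \not\sim_1 t$ upgrades to $\bot \not\sim_\ell t$ for every $\ell \ge 1$ and every $t \in \{t_1,t_1',t_2,t_2'\}$. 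The overall idea is to peel off two levels: first express $s \sim_{\ell+2} s'$ as a condition on the $\sim_{\ell+1}$-classes of the middle-layer states, and then express those $\sim_{\ell+1}$-relations in terms of $\sim_\ell$ on $t_1,t_1',t_2,t_2'$.

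For the Or-gadget the middle states $u_{12},u_{1'2'},u_{12'},u_{1'2}$ have successor sets $\{t_1,t_2\}$, $\{t_1',t_2'\}$, $\{t_1,t_2'\}$, $\{t_1',t_2\}$. The direction ``$\Leftarrow$'' is direct: if, say, $t_1 \sim_\ell t_1'$, one checks $u_{12} \sim_{\ell+1} u_{1'2}$ and $u_{1'2'} \sim_{\ell+1} u_{12'}$, which matches the two successors of $s$ against the two successors of $s'$ (both ways); the case $t_2 \sim_\ell t_2'$ is symmetric. For ``$\Rightarrow$'' I argue contrapositively: assuming $t_1 \not\sim_\ell t_1'$ and $t_2 \not\sim_\ell t_2'$, I show that the successor $u_{12}$ of $s$ is $\sim_{\ell+1}$ to neither $u_{12'}$ nor $u_{1'2}$, whence $s \not\sim_{\ell+2} s'$. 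Here transitivity of $\sim_\ell$ does the work: a match of $u_{12}$ with $u_{12'}$ would force both $t_2 \sim_\ell t_1$ and $t_2' \sim_\ell t_1$ (the ``diagonal'' $t_2 \sim_\ell t_2'$ being blocked), hence $t_2 \sim_\ell t_2'$, a contradiction; a match of $u_{12}$ with $u_{1'2}$ would force $t_1 \sim_\ell t_2$ and $t_1' \sim_\ell t_2$, hence $t_1 \sim_\ell t_1'$, again a contradiction. Note that the hypothesis on $\bot$ is not needed for this part.

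For the And-gadget the distinguishing feature is the extra successor $\bot$ of $u_2$ and $u_2'$. Because $\bot \not\sim_\ell t$ for every $t \in \{t_1,t_1',t_2,t_2'\}$, the state $\bot$ can only be matched by $\bot$ itself, which rules out any $\sim_{\ell+1}$-match between a ``$1$''-node and a ``$2$''-node: for instance $u_1$ (successor $\{t_1\}$) cannot be $\sim_{\ell+1}$ to $u_2'$ (successor $\{t_2',\bot\}$), since the successor $\bot$ of $u_2'$ has no $\sim_\ell$-partner among the successors of $u_1$. Hence the only admissible matches of the successors of $s$ against those of $s'$ are $u_1 \leftrightarrow u_1'$ and $u_2 \leftrightarrow u_2'$. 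Finally one computes $u_1 \sim_{\ell+1} u_1'$ iff $t_1 \sim_\ell t_1'$, and $u_2 \sim_{\ell+1} u_2'$ iff $t_2 \sim_\ell t_2'$ (again using $\bot \not\sim_\ell t_2,t_2'$ to force a genuine $t_2$--$t_2'$ match rather than a spurious one through $\bot$). Combining, $s \sim_{\ell+2} s'$ holds iff both conjuncts hold.

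I expect the main obstacle to be the ``$\Rightarrow$'' direction of the Or-gadget: one must rule out the off-diagonal matches $u_{12} \sim_{\ell+1} u_{12'}$ and $u_{12} \sim_{\ell+1} u_{1'2}$, and the only leverage is transitivity of $\sim_\ell$ combined with the two non-equivalence hypotheses. Tracking which successor must match which, and verifying that every escape route collapses a non-equivalence into an equivalence, is the delicate part, although each individual check is a short case analysis. The And-gadget is comparatively routine once the role of $\bot$ as an unmatchable marker is isolated.
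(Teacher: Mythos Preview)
Your proposal is correct and follows essentially the same approach as the paper. The paper's proof is very terse---it only spells out the ``$\Rightarrow$'' direction of~(a) as an example and declares the rest easy---and your argument for that direction is the same transitivity trick, merely cast contrapositively (you assume $t_1 \not\sim_\ell t_1'$ and $t_2 \not\sim_\ell t_2'$ and derive a contradiction from $u_{12} \sim_{\ell+1} u_{12'}$, whereas the paper assumes $u_{12} \sim_{\ell+1} u_{12'}$ and derives $t_2 \sim_\ell t_2'$ directly via the disjunction $t_2 \sim_\ell t_2'$ or $t_2 \sim_\ell t_1 \sim_\ell t_2'$); the underlying case analysis is identical.
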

\begin{proof}
All claims are easy to verify.
As an example, we show for~(a) that $s \sim_{\ell+2} s'$ implies that $t_1 \sim_\ell t_1'$ or $t_2 \sim_\ell t_2'$.
Let $s \sim_{\ell+2} s'$.
Then $u_{12} \sim_{\ell+1} u_{12'}$ or $u_{12} \sim_{\ell+1} u_{1'2}$.
Assume $u_{12} \sim_{\ell+1} u_{12'}$ (the other case is symmetric).
It follows $t_2 \sim_\ell t_2'$ or $t_2 \sim_\ell t_1$.
Similarly, it follows $t_2' \sim_\ell t_2$ or $t_2' \sim_\ell t_1$.
We conclude $t_2 \sim_\ell t_2'$ or $t_2 \sim_\ell t_1 \sim_\ell t_2'$, hence $t_2 \sim_\ell t_2'$.
\qed
\end{proof}

The Or-gadget in Figure~\ref{fig-gadgets}~(a) can be seen as a variant of the ``defender's forcing technique'';
 e.g., the construction of~\cite[Figure~3]{JancarS08} could also be used.
We prefer the gadgets in Figure~\ref{fig-gadgets} as they also work in the probabilistic case, see Section~\ref{sub:probabilistic}.
Now we are ready to prove Theorem~\ref{thm-main}:
\begin{proof}[of Theorem~\ref{thm-main}]
Recall that Proposition~\ref{prop-counter-game} states that the problem of determining the winner in a hit-or-run game is EXPTIME-complete.
We reduce this problem in polynomial time to BPA bisimilarity:
we construct in polynomial time an instance of the BPA bisimilarity problem,
 so that we have bisimilarity if and only if Player~$0$ can win the hit-or-run game.

Let $(S_0, S_1, \mathord{\tran{}}, s_\init, s_\fin, k_\fin)$ be a hit-or-run game.
Let $b \in \N$ such that $2^b > k_\fin$.
W.l.o.g.\ we assume that all states $s \in S$ have exactly two outgoing transitions,
 say $s \tran{\ell(s)_1} t(s)_1$ and $s \tran{\ell(s)_2} t(s)_2$, with $\ell(s)_1, \ell(s)_2 \in \{0, \ldots, 2^b\}$ and $t(s)_1, t(s)_2 \in S \cup \{s_\fin\}$.

Let $\Bin := \{\#_0, \#_1, \ldots, \#_b\}$.
We define a map $\bdec{\cdot} : \Bin^* \to \N$ by $\bdec{\#_{i(1)} \#_{i(2)} \cdots \#_{i(n)}} := \sum_{j=1}^n 2^{i(j)}$.
Conversely, we define a map $\benc{\cdot} : \{0, \ldots, 2^b\} \to \Bin^*$ by $\benc{n} := \#_{i(1)} \#_{i(2)} \cdots \#_{i(z(n))}$
 where $z(n) \in \{0, \ldots, b\}$ and $0 \le i(1) < i(2) < \ldots < i(z(n)) \le b$ and $n = \bdec{\benc{n}}$. 
Intuitively, $\benc{n}$ is the sequence of set bits in the (unique) binary representation of~$n$,
 and $z(n)$ is the number of set bits in this representation.

We construct a BPA $(\Gamma,\Sigma,\btran{})$ with an action $a \in \Sigma$.
We write $X \btran{} \alpha$ to mean $X \btran{a} \alpha$.
We include $\Bin \subseteq \Gamma$ and the following rules for $i \in \{0, \ldots, b\}$:
\[
 \#_i \btran{} \#_0 \cdots \#_{i-1}\,.
\]
Note that $\#_0 \btran{} \varepsilon$ where $\varepsilon$ denotes the empty word.
It follows for the generated LTS that $\alpha \tran{} \beta$ with $\alpha, \beta \in \Bin^*$ implies that $\bdec{\beta} = \bdec{\alpha} - 1$,
 i.e., we have effectively implemented a counter that counts down.

We also include a stack symbol $\bot \in \Gamma$, either with no outgoing rules or with a single outgoing rule $\bot \btran{\bar{a}} \bot$
 for an action $\bar{a} \in \Sigma$ with $a \ne \bar{a}$,
 depending on whether we want to avoid a second action or a dead stack symbol.
Observe that for $\alpha, \alpha' \in \Bin^*$ and $\beta, \beta' \in \Gamma^*$
 we have $\alpha \bot \beta \sim \alpha' \bot \beta'$ if and only if $\bdec{\alpha} = \bdec{\alpha'}$.
We include two fresh symbols $s_\fin, s_\fin' \in \Gamma$.
We also include rules $s_\fin \btran{} \varepsilon$ and $s_\fin' \btran{} \benc{k_\fin} \bot$.
By the previous observation, for all $\alpha \in \Bin^*$ we have
\begin{equation} \label{eq-ind-base}
 s_\fin \alpha \bot \sim s_\fin' \alpha \bot \quad \text{if and only if} \quad \bdec{\alpha} = k_\fin\,.
\end{equation}
Let $S := S_0 \cup S_1$, and let $S' := \{s' \mid s \in S\}$ be a copy of~$S$.
We include $S$ and~$S'$ in $\Gamma$.
We now aim at generalizing~\eqref{eq-ind-base} to all $s \in S$ and $\alpha \in \Bin^*$ as follows:
\begin{equation} \label{eq-ind}
 s \alpha \bot \sim s' \alpha \bot \quad \text{if and only if} \quad (s,\bdec{\alpha}) \text{ is winning for Player~$0$.}
\end{equation}
Recall that for each $s \in S$ we have two outgoing transitions $s \tran{\ell(s)_1} t(s)_1$ and $s \tran{\ell(s)_2} t(s)_2$.
For each $s \in S_0$ (or $s \in S_1$, respectively) we include fresh symbols $u(s)_{12}, u(s)_{1'2'}, u(s)_{12'}, u(s)_{1'2} \in \Gamma$
 (or $u(s)_1, u(s)_1', u(s)_2, u(s)_2' \in \Gamma$, respectively).
For each $s \in S_0$ we implement an Or-gadget, see Figure~\ref{fig-gadgets}~(a):
\begin{align*}
 s           &\btran{} u(s)_{1 2 }                  & s'          &\btran{} u(s)_{1 2'} \\
 s           &\btran{} u(s)_{1'2'}                  & s'          &\btran{} u(s)_{1'2 } \\
 u(s)_{1 2 } &\btran{} t(s)_1 \benc{\ell(s)_1}      & u(s)_{1 2'} &\btran{} t(s)_1 \benc{\ell(s)_1} \\
 u(s)_{1 2 } &\btran{} t(s)_2 \benc{\ell(s)_2}      & u(s)_{1 2'} &\btran{} t(s)_2'\benc{\ell(s)_2} \\
 u(s)_{1'2'} &\btran{} t(s)_1'\benc{\ell(s)_1}      & u(s)_{1'2 } &\btran{} t(s)_1'\benc{\ell(s)_1} \\
 u(s)_{1'2'} &\btran{} t(s)_2'\benc{\ell(s)_2}      & u(s)_{1'2 } &\btran{} t(s)_2 \benc{\ell(s)_2}
\intertext{For each $s \in S_1$ we implement an And-gadget, see Figure~\ref{fig-gadgets}~(b):}
 s      &\btran{} u(s)_1                  & s'      &\btran{} u(s)_1' \\
 s      &\btran{} u(s)_2                  & s'      &\btran{} u(s)_2' \\
 u(s)_1 &\btran{} t(s)_1 \benc{\ell(s)_1} & u(s)_1' &\btran{} t(s)_1' \benc{\ell(s)_1} \\
 u(s)_2 &\btran{} t(s)_2 \benc{\ell(s)_2} & u(s)_2' &\btran{} t(s)_2' \benc{\ell(s)_2} \\
 u(s)_2 &\btran{} \bot                    & u(s)_2' &\btran{} \bot
\end{align*}
A straightforward induction using Lemma~\ref{lem-gadgets} now establishes~\eqref{eq-ind}.
Finally, we include two fresh symbols $X,X' \in \Gamma$ and two rules $X \btran{} s_\init \bot$ and $X' \btran{} s_\init' \bot$.
It follows that $X \sim X'$ holds if and only if Player~$0$ can win the hit-or-run game.
This completes the reduction.
\qed
\end{proof}

\section{Remarks}

\subsection{No Dead Symbols and Only One Action}

We remark that Theorem~\ref{thm-main} does not extend to BPAs without dead symbols and with only one action:
In fact, we have the following proposition:
\begin{proposition} \label{prop-not-stronger}
 The bisimilarity problem for BPAs without dead symbols and with only one action is in P.
\end{proposition}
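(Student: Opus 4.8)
The plan is to exploit the fact that, with a single action and no dead symbols, the empty word~$\varepsilon$ is the only configuration without an outgoing transition. First I would split the configurations into \emph{mortal} ones (those that can reach~$\varepsilon$, equivalently those of finite norm~$\norm{\cdot}$) and \emph{immortal} ones (infinite norm). Since the norm is additive, a configuration is immortal if and only if it contains an immortal symbol, and the set of immortal symbols is computable by a straightforward least-fixpoint iteration in polynomial time. The key observation is that \emph{all} immortal configurations are bisimilar: the relation pairing any two immortal configurations is a bisimulation, because every successor of an immortal configuration is again immortal (hence nonempty), so moves can be matched arbitrarily. Conversely, a mortal configuration is never bisimilar to an immortal one, since a mortal configuration can make a sequence of moves reaching the dead state~$\varepsilon$, while an immortal configuration has only live successors; thus they are already distinguished by some approximant~$\sim_\ell$. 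Writing $\top$ for the single bisimulation class of immortal configurations, the two input symbols are therefore bisimilar trivially when both are immortal, never when exactly one of them is, and the only remaining case is that both are mortal.

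For the mortal case I would reduce to a \emph{normed} system. Using that bisimilarity is a congruence for sequential composition in BPAs, I replace every right-hand side that is immortal (i.e.\ contains an immortal symbol) by a single fresh absorbing symbol~$\top$ with the rule $\top \btran{} \top$; since $\top$ is bisimilar to every immortal configuration, this does not change~$\sim$ on the original symbols. After this step every symbol other than~$\top$ is normed, $\top$ is absorbing, and from any mortal configuration the only reachable configurations are mortal words over the normed symbols together with the fixed class~$\top$. On the normed part all the standard machinery applies: right-cancellation ($\alpha\gamma \sim \beta\gamma$ implies $\alpha \sim \beta$ for normed $\alpha,\beta$) holds, every mortal word has a unique decomposition into $\sim$-primes, and the polynomial-time decision procedure for normed BPAs~\cite{Hirshfeld1996} computes a polynomial-size base of the bisimulation. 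The transitions into~$\top$ are treated as transitions into one additional, a~priori known bisimulation class, which the base computation can accommodate. This places the problem in~P.

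The main obstacle I anticipate is precisely the single non-normed class~$\top$: because it is non-normed, the algorithm of~\cite{Hirshfeld1996} cannot be cited as a black box, and I would have to verify that right-cancellation and the prime-decomposition/base construction survive the presence of the $\top$-escape transitions---which they should, as $\top$ is a fixed absorbing class with trivial behaviour. A secondary technical point is that norms can be exponentially large, so every norm comparison used in the base computation must be carried out symbolically (as in~\cite{Hirshfeld1996}) rather than by evaluating norm values, in order to stay within polynomial time. It is worth noting how sharply this contrasts with Theorem~\ref{thm-main}: there the single dead symbol~$\bot$ is exactly what lets two configurations ``freeze'' and compare their counter contents~$\bdec{\cdot}$, and its removal is what collapses the complexity from EXPTIME-hard down to~P.
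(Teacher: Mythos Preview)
Your high-level strategy coincides with the paper's: split configurations by norm, observe that all configurations of infinite norm are bisimilar (and none is bisimilar to a finite-norm configuration), and reduce the remaining finite-norm case to the polynomial-time algorithm of~\cite{Hirshfeld1996} for normed BPAs. The only substantive difference is how the ``escape to the immortal class'' is encoded in the reduced system. You introduce an absorbing symbol~$\top$ with rule $\top \btran{} \top$ over the \emph{same} action, which leaves the reduced system unnormed and, as you yourself flag, forces you to open up the Hirshfeld--Jerrum--Moller construction and check that right-cancellation and the base computation survive a single non-normed sink. The paper sidesteps this obstacle entirely by a neater encoding: each finite-norm symbol~$X$ that has a rule $X \btran{a} \alpha$ with $|\alpha|=\infty$ instead gets a self-loop $X \btran{\bar a} X$ under a \emph{fresh second action}~$\bar a$, and the infinite-norm target is simply dropped. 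The resulting BPA~$\Delta_\bullet$ is genuinely normed (the $\bar a$-loops do not obstruct the $a$-path to~$\varepsilon$), so~\cite{Hirshfeld1996} applies as a black box; it then suffices to check that $\sim$ in the original system and $\sim_\bullet$ in~$\Delta_\bullet$ agree on finite-norm words, which is done by exhibiting two bisimulations directly. Your route would ultimately work as well, but the difficulty you anticipate is an artefact of insisting on a single action in the reduced system rather than an intrinsic feature of the problem.
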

\begin{proof}
 Given a BPA $(\Gamma,\Sigma,\mathord{\btran{}})$ with no dead symbols,
  define the \emph{norm} $|\alpha| \in \N \cup \{\infty\}$ of $\alpha \in \Gamma^*$ as
   the length of the shortest path from~$\alpha$ to~$\varepsilon$ in the generated LTS.
 Note that $|\varepsilon| = 0$ and $|\alpha \beta| = |\alpha| + |\beta|$.
 One can easily compute~$|X|$ for all $X \in \Gamma$ in polynomial time.
 If $|X| < \infty$ holds for all $X \in \Gamma$, we say that the BPA is \emph{normed}.
 The bisimilarity problem for normed BPAs is in~P~\cite{Hirshfeld1996}.

 Let $\Delta = (\Gamma,\{a\},\mathord{\btran{}})$ be a (possibly unnormed) BPA with no dead symbols and only one action~$a$,
  and let $X_\init, X_\init' \in \Gamma$ be two initial states.
 In the following we show how to determine in polynomial time whether $X_\init \sim X_\init'$ holds.
 If $|X_\init| = |X_\init'| = \infty$, then $X_\init \sim X_\init'$;
  in fact, all $\alpha \in \Gamma^*$ with $|\alpha| = \infty$ are bisimilar.
 If exactly one of $|X_\init|, |X_\init'|$ is infinite, then $X_\init \not\sim X_\init'$.
 Hence, assume $|X_\init| < \infty$ and $|X_\init'| < \infty$.
 Define the normed BPA $\Delta_\bullet := (\Gamma_\bullet, \{a,\bar{a}\}, \mathord{\btran{}_\bullet})$ with
  $\Gamma_\bullet := \{X \in \Gamma \mid |X| < \infty\}$,
  and $X \btran{a}_\bullet \alpha$ if $X \btran{a} \alpha$ and $|\alpha| < \infty$,
  and $X \btran{\bar{a}}_\bullet X$ if $X \btran{a} \alpha$ and $|\alpha| = \infty$.
 We claim that we have $X_\init \sim X_\init'$ if and only if $X_\init \sim_\bullet X_\init'$,
  where $\mathord{\sim}$ and~$\mathord{\sim_\bullet}$ mean bisimilar in $\Delta$ and~$\Delta_\bullet$, respectively.
 Since $\Delta_\bullet$ is normed, the latter condition can be decided in polynomial time, as mentioned above.

 It remains to show the claim.
 We show for $\alpha, \alpha' \in \Gamma_\bullet^*$ that we have $\alpha \sim \alpha'$ if and only if $\alpha \sim_\bullet \alpha'$.
 It is easy to verify that $\{(\beta,\beta') \in \Gamma_\bullet^* \times \Gamma_\bullet^* \mid \beta \sim \beta'\}$
  is a bisimulation in~$\Delta_\bullet$, hence $\alpha \sim \alpha'$ implies $\alpha \sim_\bullet \alpha'$.
 Similarly, $\mathord{\sim_\bullet} \cup \{(\beta,\beta') \in \Gamma^* \times \Gamma^* \mid |\beta| = |\beta'| = \infty\}$
  is a bisimulation in~$\Delta$, hence $\alpha \sim_\bullet \alpha'$ implies $\alpha \sim \alpha'$.
\qed
\end{proof}

\subsection{Fully Probabilistic BPAs} \label{sub:probabilistic}

Our main result and its proof generalize to \emph{fully probabilistic BPAs}.

A \emph{probabilistic LTS (pLTS)} $(S,\Sigma,\mathord{\tran{}})$ is like an LTS,
 except that we have $\mathord{\tran{}} \subseteq S \times \Sigma \times \mathcal{D}(S)$,
  where $\mathcal{D}(S)$ denotes the set of probability distributions on~$S$.
For instance, we could have $s \btran{a} d$ with $d(t_1) = 0.7$ and $d(t_2) = 0.3$.
Given a pLTS, a \emph{bisimulation} is an \emph{equivalence} relation $R \subseteq S \times S$ such that
 for all $(s,s') \in R$ and all transitions $s \tran{a} d$
 there is $d'$ with $s' \tran{a} d'$ and $\sum_{s \in E} d(s) = \sum_{s \in E }d'(s)$ holds for every~$E \subseteq S$ that is an equivalence class of~$R$.
As before, two states are bisimilar if a bisimulation relates them, see~\cite{ChenBW12} for more details.

A \emph{probabilistic BPA (pBPA)} $(\Gamma,\Sigma,\mathord{\btran{}})$ is like a BPA,
 except that we have $\mathord{\btran{}} \subseteq \Gamma \times \Sigma \times \mathcal{D}(\Gamma^*)$.
A pBPA is \emph{fully probabilistic} if for each $X \in \Gamma$ and $a \in \Sigma$
 there is at most one distribution~$d$ with $X \btran{a} d$.
A pBPA induces a pLTS in the same way as a BPA induces an LTS.
The \emph{pBPA bisimilarity problem} is defined as expected,
 i.e., it asks whether two given stack symbols of a given pBPA are bisimilar.
We have the following theorem:
\begin{theorem} \label{thm-prob}
 The pBPA bisimilarity problem is EXPTIME-hard, even if the pBPA is fully probabilistic and has only one action (and a dead stack symbol).
\end{theorem}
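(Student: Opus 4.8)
The plan is to reuse the reduction from the proof of Theorem~\ref{thm-main} almost verbatim, starting from its single-action, dead-$\bot$ variant, and to render the resulting BPA fully probabilistic. The only non-determinism in that construction comes from stack symbols carrying \emph{two} rules with the common action~$a$: the symbols $s,s'$ and $u(s)_{12}, u(s)_{1'2'}, u(s)_{12'}, u(s)_{1'2}$ of an Or-gadget, and $s,s',u(s)_2,u(s)_2'$ of an And-gadget. Every other symbol---each counter symbol $\#_i$, the symbols $s_\fin,s_\fin',X,X'$, and the one-rule symbols $u(s)_1,u(s)_1'$---already has a unique right-hand side. I would replace each pair $Y \btran{a} \alpha_1$, $Y \btran{a} \alpha_2$ of same-action rules by a single rule $Y \btran{a} d$ with $d(\alpha_1)=d(\alpha_2)=\tfrac12$, keep all other rules as point masses, and leave $\bot$ dead. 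The result is a fully probabilistic pBPA with the single action~$a$ and the single dead symbol~$\bot$, exactly as required.

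The crux is the probabilistic analogue of Lemma~\ref{lem-gadgets}: under these uniform $\tfrac12/\tfrac12$ distributions the gadgets of Figure~\ref{fig-gadgets} still compute the OR and AND of the conditions $t_1 \sim_\ell t_1'$ and $t_2 \sim_\ell t_2'$, now for \emph{probabilistic} $\sim_\ell$. Probabilistic bisimilarity matches the outgoing distributions of~$s$ and~$s'$ by demanding $d(E)=d'(E)$ for every $\sim_{\ell+1}$-equivalence class~$E$; since each distribution is balanced (two branches of mass $\tfrac12$), this forces the multiset of $\sim_{\ell+1}$-classes of the two $s$-successors to coincide with that of the two $s'$-successors, i.e.\ the branches must be paired up. For the Or-gadget the two admissible pairings are exactly the two disjuncts: $u_{12} \sim_{\ell+1} u_{12'}$ reduces, via the shared successor~$t_1$, to $t_2 \sim_\ell t_2'$, whereas $u_{12} \sim_{\ell+1} u_{1'2}$ reduces, via the shared successor~$t_2$, to $t_1 \sim_\ell t_1'$. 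For the And-gadget the hypothesis $\bot \not\sim_1 t$ (hence $\bot \not\sim_\ell t$ for all $\ell \ge 1$) ensures both that a point mass on some $t_i$ can never agree on all classes with a genuinely split distribution, and that $\bot$ is matchable to no $t_i$; the branches carrying $\bot$ can therefore be paired only with each other, ruling out the crossed pairing and simultaneously forcing $u_1 \sim_{\ell+1} u_1'$ and $u_2 \sim_{\ell+1} u_2'$, which yields the conjunction.

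The base case is unchanged: the counter symbols and $s_\fin,s_\fin'$ all carry point-mass rules, so $\alpha\bot\beta$ performs a deterministic countdown of length $\bdec{\alpha}$ and then halts at the dead symbol~$\bot$, and two such deterministic chains are probabilistically bisimilar iff they have equal length; this gives~\eqref{eq-ind-base} verbatim. A straightforward induction using the probabilistic gadget lemma then reestablishes~\eqref{eq-ind}, whence $X \sim X'$ iff Player~$0$ wins the hit-or-run game, which is EXPTIME-hard by Proposition~\ref{prop-counter-game}. I expect the main obstacle to be precisely this probabilistic gadget lemma: one must check that replacing the defender's and attacker's non-deterministic choices by uniform coin flips introduces no spurious matches under the equivalence-class condition, the key being that $\bot$ acts as an unmatchable marker enforcing the conjunction in the And-gadget.
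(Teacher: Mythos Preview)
Your proposal is correct and follows exactly the paper's own approach: replace every two-way nondeterministic branch by a uniform $\tfrac12/\tfrac12$ distribution in the single-action, dead-$\bot$ variant of the construction, and invoke the probabilistic analogue of Lemma~\ref{lem-gadgets}. The paper simply cites~\cite{ChenBW12} for that analogue, whereas you spell out the class-mass argument for both gadgets; your sketch is sound, so this is the same proof with one step made more explicit.
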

\begin{proof}
The proof is completely analogous to the proof of Theorem~\ref{thm-main},
 if the former nondeterministic branching is replaced by uniform probabilistic branching;
  e.g., in the And-gadget of Figure~\ref{fig-gadgets}~(b) we now have $s \tran{a} d$ with $d(u_{1}) = 0.5$ and $d(u_{2}) = 0.5$,
   and $u_1 \tran{a} d'$ with $d'(t_1) = 1$.
Lemma~\ref{lem-gadgets} carries over to the probabilistic case~\cite{ChenBW12},
 and so the construction of Section~\ref{sec-main} establishes the theorem.
\qed
\end{proof}

\subsection{Future Work}

Closing the gap between our EXPTIME lower bound for BPA bisimilarity
 and the 2EXPTIME upper bound from~\cite{Burkart01,JancarBPA13} is an obvious possible target for future research.

\bigskip
\noindent
\textbf{Acknowledgements.}
I would like to thank Javier Esparza, Vojt\v{e}ch Forejt, Christoph Haase, Petr Jan\v{c}ar,
 Ji\v{r}\'{\i} Srba, and James Worrell for valuable hints and discussions,
and anonymous reviewers for their insightful and constructive comments.



\bibliographystyle{elsarticle-num}
\biboptions{sort&compress}

\bibliography{db}


\end{document}